
\documentclass[aps,pra,groupedaddress,twocolumn,nofootinbib,superscriptaddress]{revtex4}

\usepackage{amsmath}
\usepackage{amssymb}
\usepackage{amsthm}
\usepackage{bm}
\usepackage{graphicx}
\usepackage{color}
\usepackage[hyperindex,breaklinks]{hyperref}

\usepackage{xcolor}


\newcommand{\RR}{\mathbb{R}}


\newcommand{\BC}{\mathcal{B}}

\newcommand{\HC}{\mathcal{H}}
\newcommand{\IC}{\mathcal{I}}

\newcommand{\RC}{\mathcal{R}}
\newcommand{\SC}{\mathcal{S}}

\newcommand{\ket}[1]{|#1\rangle}                  
\newcommand{\ip}[2]{\langle #1|#2\rangle}         
\newcommand{\matl}[3]{\langle #1|#2|#3\rangle}    
\newcommand{\Tr}{{\rm Tr}}                        
\newcommand{\vect}[1]{\boldsymbol{#1}}         

\newcommand{\ii}{\mathrm{i}}					  
\newcommand{\ee}{\mathrm{e}}  

\def\dya#1{|#1\rangle \langle#1|}

\def\tr{{\rm Tr}}

\long\def\ca#1\cb{} 

\newtheorem{theorem}{Theorem}
\newtheorem{lemma}{Lemma}

\newtheorem{example}{Example}

\newtheorem*{definition*}{Definition}

\begin{document}

\title{Universal Uncertainty Relations}

\author{Shmuel Friedland}
\email{friedlan@uic.edu}
\affiliation{Department of Mathematics, Statistics and Computer Science, University
of Illinois at Chicago, 851 S. Morgan Street, Chicago, IL 60607-7045, U.S.A.}

\author{Vlad Gheorghiu}
\email{vgheorgh@gmail.com}
\affiliation{Institute for Quantum Science and Technology and
Department of Mathematics and Statistics,
University of Calgary, 2500 University Drive NW,
Calgary, AB, T2N 1N4, Canada}
\affiliation{Institute for Quantum Computing, University of Waterloo, Waterloo, ON, N2L 3G1, Canada}

\author{Gilad Gour}
\email{gour@ucalgary.ca}
\affiliation{Institute for Quantum Science and Technology and
Department of Mathematics and Statistics,
University of Calgary, 2500 University Drive NW,
Calgary, AB, T2N 1N4, Canada}

\date{Version of \today}

\begin{abstract}
Uncertainty relations are a distinctive characteristic of quantum theory that impose intrinsic limitations on the precision with which physical properties can be simultaneously determined. The modern work on uncertainty relations employs \emph{entropic measures} to quantify the lack of knowledge associated with measuring non-commuting observables. However, there is no fundamental reason for using entropies as quantifiers; any functional relation that characterizes the uncertainty of the measurement outcomes defines an uncertainty relation. Starting from a very reasonable assumption of invariance under mere relabelling of the measurement outcomes, we show that Schur-concave functions are the most general uncertainty quantifiers. We then discover a fine-grained uncertainty relation that is given in terms of the majorization order between two probability vectors, 
\textcolor{black}{significantly extending a majorization-based uncertainty relation first introduced in  [M. H. Partovi, Phys. Rev. A \textbf{84}, 052117 (2011)].}
Such a vector-type uncertainty relation generates an infinite family of distinct scalar uncertainty relations via the application of arbitrary uncertainty quantifiers. Our relation is therefore universal and captures the essence of uncertainty in quantum theory. 

\end{abstract}

\maketitle
Uncertainty relations lie at the core of quantum mechanics and are a direct manifestation of the non-commutative structure of the theory. In contrast to classical physics, where in principle any observable can be measured with arbitrary precision, quantum mechanics introduces severe restrictions on the allowed measurement results of two or more non-commuting observables. Uncertainty relations are not a manifestation of the experimentalists' (in)ability of performing precise measurements, but are inherently determined by the incompatibility of the measured observables.

The first formulation of the uncertainty principle was provided by Heisenberg \cite{Heisenberg1927}, who noted that more knowledge about the position of a \emph{single} quantum particle implies less certainty about its momentum and vice-versa. He expressed the principle in terms of standard deviations
of the momentum and position operators
\begin{equation}\label{eqn1}
\Delta X\cdot\Delta P \geqslant \frac{\hbar}{2}.
\end{equation}
Robertson \cite{PhysRev.34.163} generalized Heisenberg's uncertainty principle to any two arbitrary observables $A$ and $B$ as
\begin{equation}\label{eqn2}
\Delta A\cdot\Delta B\geqslant \frac{1}{2} \vert\matl{\psi}{[A,B]}{\psi}\vert.
\end{equation}

A major drawback of Robertson's uncertainty principle is that it depends on the state $\ket{\psi}$ of the system. In particular, when $\ket{\psi}$ belongs to the null-space of the commutator $[A,B]$,  the right upper bound becomes trivially zero. Deutsch \cite{PhysRevLett.50.631} addressed this problem by providing an \emph{entropic} uncertainty relation (EUR) in terms of the Shannon entropies of any two non-degenerate observables, later improved by Maassen and Uffink \cite{PhysRevLett.60.1103} to 
\begin{equation}\label{eqn3}
H(A) + H(B) \geqslant -2\log c(A,B).
\end{equation}
Here $H(A)$ is the Shannon entropy \cite{CoverThomas:ElementsOfInformationTheory}
of the probability distribution induced by measuring the state $\ket{\psi}$ of the system in the eigenbasis $\{\ket{a_j}\}$ of the oservable $A$ (and similarly for $B$). The bound on the right hand side $c(A,B):=\max_{m,n}|\ip{a_m}{b_n}|$ represents the maximum overlap between the bases elements, and is independent of the state $\ket{\psi}$.

Recently the study of uncertainty relations intensified \cite{PhysRevLett.106.110506, PhysRevLett.108.210405}  (see also \cite{1367-2630-12-2-025009, birula-review} for recent surveys), and as a result various important applications have been discovered, ranging from security proofs for quantum cryptography \cite{Damgaard07atight,6157089,PhysRevLett.100.220502}, information locking \cite{PhysRevLett.100.220502}, non-locality \cite{Oppenheim19112010}, and the separability problem \cite{PhysRevLett.92.117903}.  There were also recent attempts to generalize uncertainty relations to more than two observables. For this case relatively little is known \cite{0305-4470-25-7-014,Sanchez1993233, PhysRevA.75.022319,PhysRevA.79.022104,wehner:062105}, as the authors investigated only particular instances of the problem such as mutually unbiased bases.

In most of the recent work on uncertainty relations, entropy functions like the Shannon and Renyi entropies are used to quantify uncertainty. However, in the context of the uncertainty principle, these entropies are by no reason the most adequate to use. Indeed, as we show here, other functions can be more suitable in providing a quantitative description for the uncertainty principle. 
\textcolor{black}
{
Our approach is based on using majorization \cite{MarshallOlkin:Majorization} to quantify uncertainty. The idea of using majorization to study uncertainty relations was first introduced in \cite{PhysRevA.84.052117}, and here we build on these ideas and provide explicit closed formulas.
}

Uncertainty is related to the ``spread" of a probability distribution, or, equivalently, to the ability of learning that probability distribution. Intuitively a less spread distribution is more certain than a more widely spread. For example, in a $d$-dimensional sample space, the probability distribution $\vect{p}=(1,0,\ldots,0)$ is the most certain, whereas the distribution $\vect{q}=(1/d,1/d,\ldots, 1/d)$ is the most uncertain. What are then the minimum requirements that a good measure of uncertainty has to satisfy?

In his seminal paper  \cite{PhysRevLett.50.631} on EURs, Deutsch pointed out that the standard deviation $\Delta$ can be increased by mere relabelling of the random variables associated with the measurements. He therefore concluded that the relation in~\eqref{eqn1} can not be used as a quantitative description of the uncertainty principle. 

Following Deutsch observation,
we assume here that the uncertainty about a random variable can not decrease under a relabelling of its alphabet, i.e. the uncertainty associated with a probability vector $\vect{p}$ can not be larger than the uncertainty associated with a relabelled version of it, $\pi\vect{p}$, where $\pi$ is some permutation matrix. In fact, both uncertainties are the same as permutations acting on a probability space are reversible. Next, we make the reasonable assumption that uncertainty can not decrease by forgetting information (discarding), see Fig.~\ref{fgr1}. 
We call this very reasonable presumption \emph{monotonicity under random relabelling} (MURR). This will be our \emph{only} requirement for a measure of uncertainty.
We therefore conclude that any reasonable  measure of uncertainty is a function only of the probability vector, is invariant under permutations of its elements, and must be non decreasing under a random relabelling of its argument. 
\begin{figure}
\includegraphics[scale=0.35]{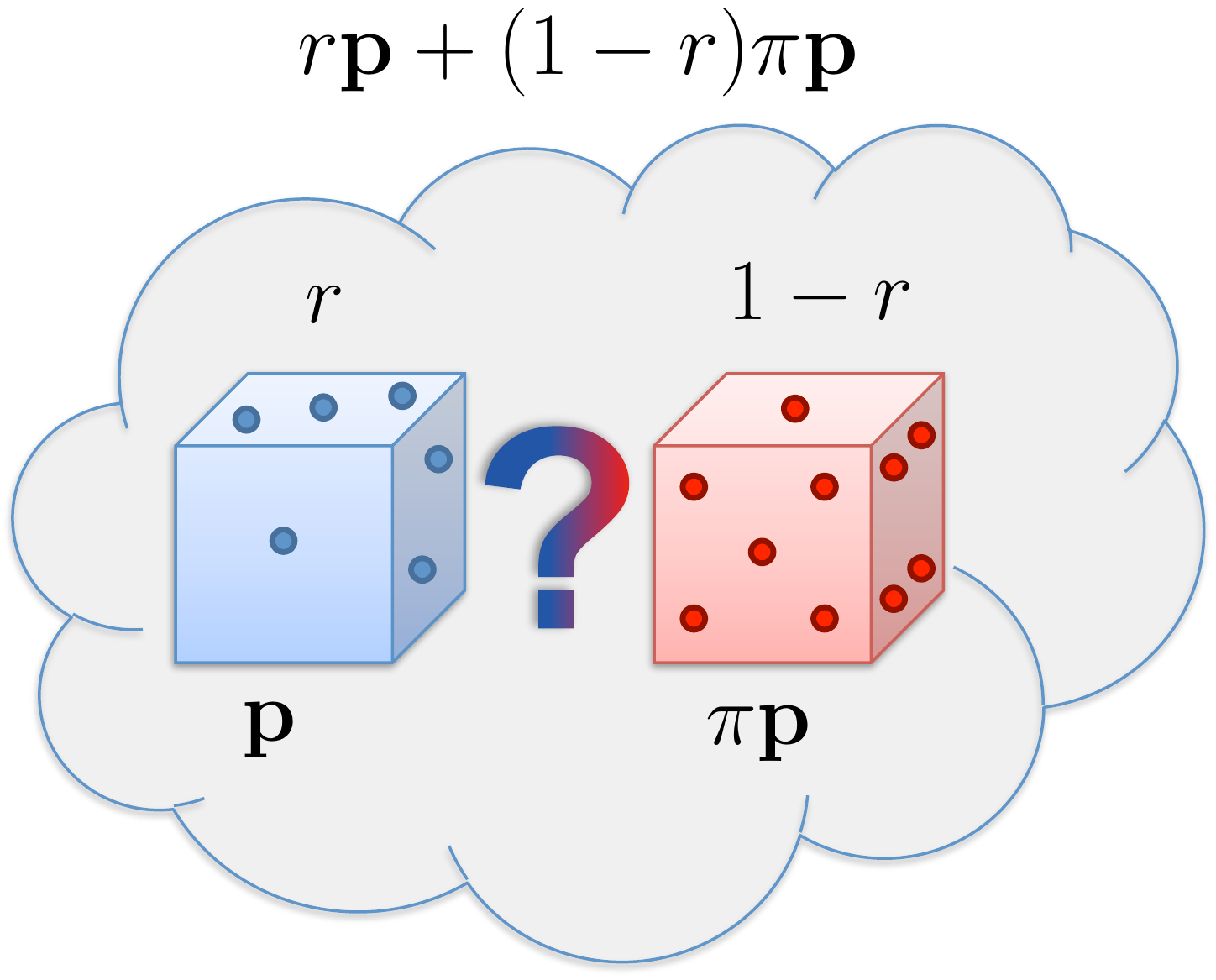}
\caption{With probability $r$ Alice samples from a random variable (blue dice), and with probability $1-r$, Alice samples from its relabeling (red dice), but at the end of the protocol she ``forgets" where she sampled from. The resulting probability distribution $r\vect{p}+(1-r)\pi\vect{p}$ is more uncertain than the initial one associated with the blue (red) dice $\vect{p}$ ($\pi\vect{p}$). Color online.}
\label{fgr1}
\end{figure}

We formulate the above requirements quantitatively using Birkhoff's theorem \cite{Birkhoff:DoublyStochasticPermutations,Bhatia:MatrixAnalysis}, which states that the convex hull of permutation matrices is the class of doubly-stochastic matrices (their components are nonnegative real numbers, and each row and column sums to 1). Birkhoff theorem thus implies that a probability vector $\vect{q}$ obtained from $\vect{p}$ by a random relabeling is more uncertain than the latter if and only if the two are related by a doubly-stochastic matrix,
$\vect{q}=D\vect{p}$, which is equivalent to $\vect{q}\prec\vect{p}$. The last equation is known as a majorization relation \cite{MarshallOlkin:Majorization} and consists of a system of $d$ inequalities\footnote{ 
A vector $\vect{x}\in\RR^d$ \emph{is majorized by} a vector $\vect{y}\in\RR^d$, and write $\vect{x}\prec\vect{y}$, whenever 
$\sum_{j=1}^k x_j^{\downarrow}\leqslant\sum_{j=1}^k y_j^{\downarrow}$ for all $1\leqslant k\leqslant d-1$, with  $\sum_{j=1}^d x_j^{\downarrow}=\sum_{j=1}^d y_j^{\downarrow}$. The down-arrow notation denotes that the component of the corresponding vector are ordered in decreasing order,  $x_1^{\downarrow}\geqslant x_2^{\downarrow}\geqslant\cdots\geqslant x_d^{\downarrow}$.}. The above discussion  implies that any measure of uncertainty has to preserve the partial order induced by majorization. The class of functions that preserve this order are the Schur-concave functions. These are functions $\Phi$ on a $d$-dimensional probability space, $\Phi:\RR^d\longrightarrow \RR$, for which $\Phi(\vect{x})\geqslant\Phi(\vect{y})$ whenever $\vect{x}\prec\vect{y}$, $\forall \vect{x},\vect{y}\in\RR^d$. We therefore define a measure of uncertainty as being any non-negative Schur-concave function that takes the value zero on the vector $\vect{x}=(1,0,\ldots,0)$. The last requirement is not essential but is convenient as it ensures that the measure is non-negative.

Our definition for a measure of uncertainty is very general and resulted solely from requiring MURR; it also encompasses the most common entropy functions used in information theory, but it is not restricted to them. As we are not concerned with asymptotic regimes, we use in the following the most general $\Phi$ to quantify uncertainty, without making any assumptions about its functional form.

Having defined what a measure of uncertainty is, we now use it to study uncertainty relations. Let $\rho$ be a mixed state on a $d$-dimensional Hilbert space $\HC\cong\mathbb{C}^d$. For simplicity of the exposition, we first consider two basis (projective)  measurements.  We denote the two orthonormal bases of $\HC$ by $\{\ket{a_m}\}_{m=1}^{d}$ and $\{\ket{b_n}\}_{n=1}^{d}$. We also denote by $p_m(\rho)=\matl{a_m}{\rho}{a_m}$ and $q_n(\rho)=\matl{b_n}{\rho}{b_n}$ the two probability distributions obtained by measuring $\rho$ with respect to these bases. We collect the numbers $p_m(\rho)$ and $q_n(\rho)$ into two probability vectors $\vect{p}(\rho)$ and $\vect{q}(\rho)$, respectively. The goal of our work is to bound the uncertainty about $\vect{p}(\rho)$ and $\vect{q}(\rho)$ by a quantity that depends only on the bases elements but not on the state $\rho$. The object of our investigation is therefore the joint probability distribution $\vect{p}(\rho)\otimes\vect{q}(\rho)$.

\begin{figure}
\includegraphics[scale=0.4]{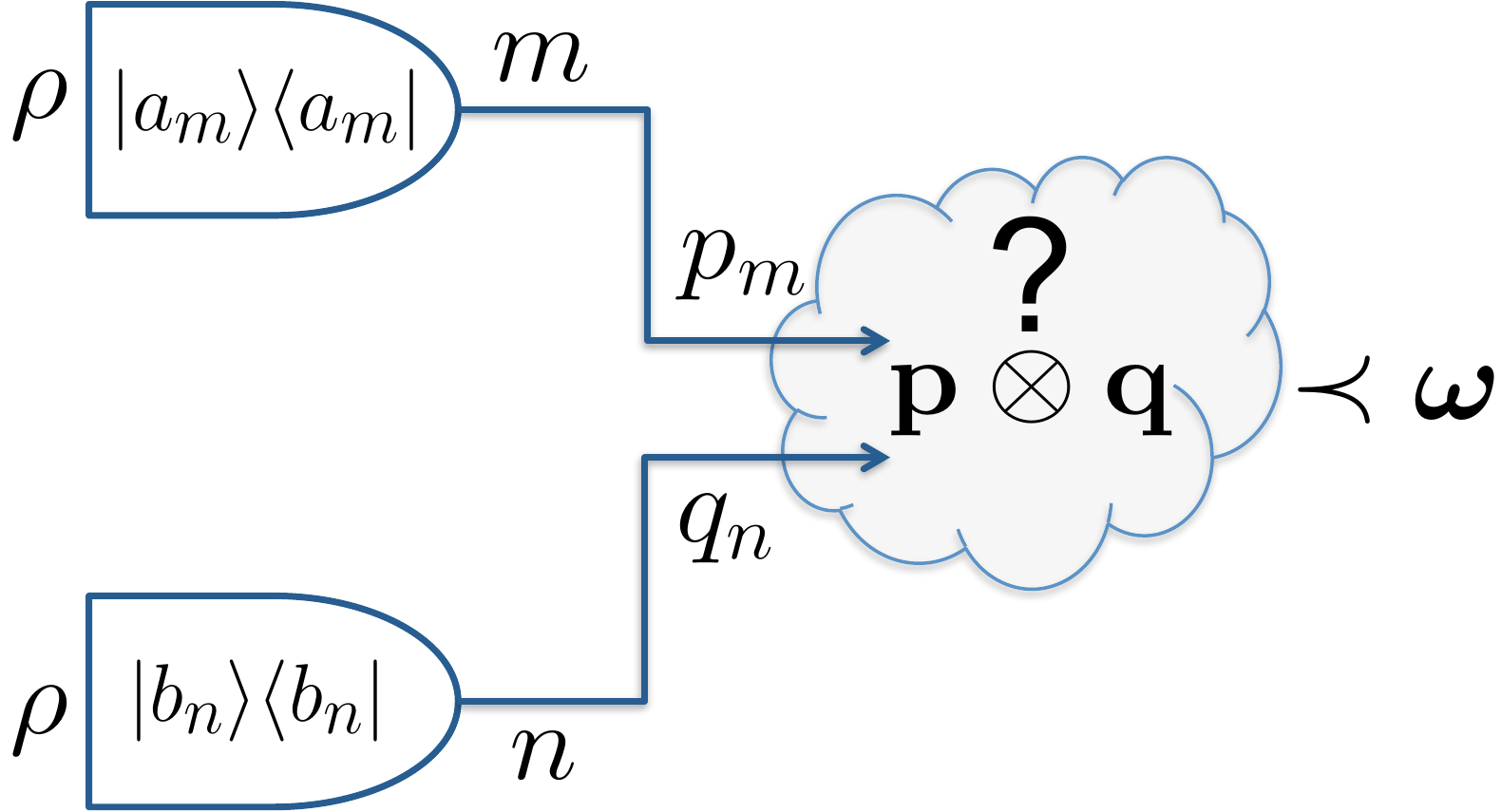}
\caption{A quantum state is measured using two orthonormal bases. We collect the induced joint probability distribution in a vector $\vect{p}\otimes\vect{q}$ and quantify its uncertainty in terms of a majorization relation, independently of the state $\rho$. Color online.}
\label{fgr2}
\end{figure}
The main result of our article is an uncertainty relation of the form
\begin{equation}\label{eqn4}
\vect{p}(\rho)\otimes\vect{q}(\rho)\prec\vect{\omega},\quad\forall \rho,
\end{equation}
where $\vect{\omega}$ is some vector independent of $\rho$ that we explicitly calculate. 
\textcolor{black}{A majorization uncertainty relation of a similar form was first introduced by Partovi in \cite{PhysRevA.84.052117}, however his right hand side of the majorization relation is not explicit but written in terms of supremum over all density matrices, which makes it difficult to calculate.}
We call \eqref{eqn4} a \emph{universal uncertainty relation} (UUR) as, for any measure of uncertainty $\Phi$,
\begin{equation}\label{eqn5}
\Phi\left(\vect{p}(\rho)\otimes\vect{q}(\rho) \right) \geqslant \Phi(\vect{\omega}),\quad\forall \rho.
\end{equation}
The UUR \eqref{eqn4} generates in fact an infinite family of uncertainty relations of the form \eqref{eqn5}, one for each $\Phi$. The right hand side of \eqref{eqn5} provides a single-number lower bound on the uncertainty of the joint measurement results. Whenever 
$\Phi$ is additive under tensor products (e.g. Renyi entropies, minus the logarithm of the G-concurrence \cite{PhysRevA.71.012318} or minus the logarithm of the minimum non-zero component of the probability distribution),  \eqref{eqn5} splits as
\begin{equation}\label{eqn6}
\Phi(\vect{p}(\rho))+\Phi(\vect{q}(\rho))\geqslant \Phi(\vect{\omega}).
\end{equation}

We now construct the $d^2$-dimensional vector $\vect{\omega}$ appearing on the right hand side of our UUR \eqref{eqn4}. 
Let $\IC_k\subset[d]\times[d]$ be a subset of $k$ distinct pair of indices $(m,n)$, where $[d]$ is the set of natural numbers ranging from $1$ to $d$. Let
\begin{equation}\label{eqn7}
\Omega_k:=\max_{\IC_k}\max_{\rho}\sum_{(m,n)\in\IC_k}p_m(\rho)q_n(\rho),
\end{equation}
where the outer maximum is over all subsets $\IC_k$ with cardinality $k$ and the inner maximum is taken over all density matrices. Then the vector $\vect{\omega}$ in the UUR \eqref{eqn4} is given by
\begin{equation}\label{eqn8}
\vect{\omega}=(\Omega_1,\Omega_2-\Omega_1,\ldots,\Omega_d-\Omega_{d-1},0,\ldots,0).
\end{equation}
Moreover, we show in the Appendix that  $\Omega_k=1$, for all $d\leqslant k\leqslant d^2$.

The quantities $\Omega_k$ in \eqref{eqn7} can be in general difficult to calculate explicitly, as they involve an optimization problem. However, in the Appendix we show that the first two elements can be computed explicitly as
\begin{align}\label{eqn9}
\Omega_1=\frac{1}{4}\left[ 1+c\right]^2\;,\quad
\Omega_2=\frac{1}{4}\left[ 1+c'\right]^2
\end{align}
where $c:=\max_{m,n}|\ip{a_m}{b_n}|$, and 
$$
c':=\max\sqrt{|\ip{a_m}{b_n}|^2+|\ip{a_{m'}}{b_{n'}}|^2}
$$ 
where the maximum is taken over all indexes $m=m'$ and $n\neq n'$, and over all indexes $n=n'$ and $m\neq m'$.

For $k>2$, we upper bound each $\Omega_k$ in \eqref{eqn7} by
\begin{align}
\widetilde\Omega_k&:=\max_{\substack{\RC,\SC\\|\RC|+|\SC|=k+1}}\max_{\rho}\left(\sum_{m\in\RC}p_m(\rho)\right) \left(\sum_{n\in\SC}q_n(\rho) \right)\label{eqn10}\\
&=\frac{1}{4}\max_{\substack{\RC,\SC\\|\RC|+|\SC|=k+1}}\left\Vert \sum_{m\in\RC}\dya{a_m} + \sum_{n\in\SC}\dya{b_n} \right\Vert^{2}_{\infty}\leqslant 1,\label{eqn11}
\end{align}
where $\RC$ ($\SC$) are subsets of distinct indices from $[d]$, $|\RC|$ ($|\SC|$) denotes the size (number of elements) of $\RC$ ($|\SC|$), and 
$\Vert\cdot\Vert_{\infty}$ denotes the infinity operator norm -- which, for positive operators (as it is in our case), coincides with the maximum eigenvalue of its argument. Moreover,
$\widetilde\Omega_d=1$. Note that $\Omega_k=\widetilde\Omega_k$ for $k=1,2$, and otherwise 
$\Omega_k\leq\widetilde\Omega_k$ since for a fixed $\rho$, all the terms in the sum of \eqref{eqn7} are strictly contained in the expression \eqref{eqn10}. The equality in~\eqref{eqn11} is non-trivial and follows from the main technical Theorem of this article (See Theorem~1 in the Appendix): 
$\max_{\rho}\Tr(\rho A)\Tr(\rho B)= \frac{1}{4}\Vert A+ B\Vert_\infty^2$, 
for two projections $A$ and $B$. 

Similar to the definition of the vector $\vect{\omega}$ in \eqref{eqn8},
we construct the vector $\vect{\widetilde\omega}$ as in~\eqref{eqn8} by replacing 
$\Omega_{k}$  with $\widetilde\Omega_k$. A simple calculation (see Appendix) shows that  
\begin{equation}\label{eqn12}
\vect{p}(\rho)\otimes\vect{q}(\rho)\prec\vect{\widetilde\omega},\quad\forall \rho.
\end{equation}
Therefore $\vect{\widetilde\omega}$ provides a (weaker) lower-bound for the UUR \eqref{eqn4}, but which is now explicitly computable. 

To appreciate the generality of our UUR \eqref{eqn4}, we compare in Fig.~\ref{fgr4} the best known lower bounds for the uncertainty of the measurement in two bases with our induced uncertainty relation \eqref{eqn5}, in which we take $\Phi$ to be the Shannon entropy $H$. We consider the region in which $c>0.83$, for which the best known bound \cite{PhysRevA.77.042110} has an explicit analytical form. We note that our bound over-performs \cite{PhysRevA.77.042110} in a large number of instances (around 90\% of the time).
For $c<0.83$, our bound tend to be slightly worse than \cite{PhysRevA.77.042110}, but this is expected since our uncertainty relation is valid for \emph{all} measures of uncertainty and is not optimized for a specific one such as Shannon's.
Next we take $\Phi=H_{\infty}$ in \eqref{eqn5} and note that we recover Maassen's and Uffink bound \cite{PhysRevLett.60.1103} for the minimum entropy, which is tight. Finally, choosing $\Phi=H_{\alpha}$ (Renyi-$\alpha$ entropy) in \eqref{eqn5} provides yet a novel entropic uncertainty relation valid for \emph{all} values of the parameter $\alpha$. 
 
\begin{figure}
\includegraphics[scale=0.55]{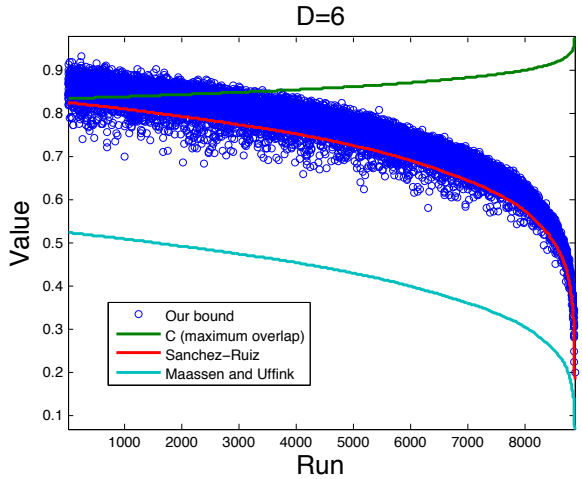}
\caption{For each run we randomly generate two orthonormal bases and a random state $\ket{\psi}$ in a 6-dimensional Hilbert space, then compute our lower bound $H(\vect\omega)$ (right hand side of \eqref{eqn5}, using  the Shannon entropy $H$ as a measure of uncertainty);  $c=\max_{m,n}|\ip{a_m}{b_n}|$ denotes the maximum overlap between the two bases. Color online.}
\label{fgr4}
\end{figure}

We now extend our results to the most general case of $L\geq2$ positive operator valued measures (POVMs). Denote by $\{\Pi_{\alpha_\ell}^{(\ell)}\}_{\alpha_\ell=1}^{N_\ell}$ the $\ell$-th POVM, with $1\leqslant \ell\leqslant L$.  The quantity $N_\ell$ denotes the number of elements in the $l$-th POVM, and the index $\alpha_\ell$ labels its elements, with $\alpha_\ell=1,2,...,N_\ell$. 
A measurement of $\rho$ with the $\ell$-th POVM $\Pi^{(\ell)}$ induces a probability distribution vector 
$\vect{p}^{(\ell)}(\rho)=({p}^{(\ell)}_{1}(\rho),{p}^{(\ell)}_{2}(\rho),...,{p}^{(\ell)}_{N_{\ell}}(\rho))$. We discover a UUR of the form
\begin{equation}\label{eqn13}
\bigotimes_{\ell=1}^L\vect{p}^{(\ell)}(\rho)\prec\vect{\omega},\quad\forall \rho,
\end{equation}
where the quantity on the left hand side represents the joint probability distribution induced by measuring $\rho$ with each POVM $\Pi^{(\ell)}$. 
Here 
\begin{equation}\label{eqn14}
\vect{\omega}=(\Omega_1,\Omega_2-\Omega_1,\Omega_3-\Omega_2,\ldots, \Omega_N-\Omega_{N-1}),
\end{equation}
where $N\equiv N_1N_2\cdots N_L$, and for  $k=1,2,...,N$
\begin{equation}\label{eqn15}
\Omega_k:=\max_{\mathcal{I}_k}\max_{\rho}\sum_{(\alpha_{1},...,\alpha_{L}) \in\mathcal{I}_k}{p}^{(1)}_{\alpha_{1}}(\rho){p}^{(2)}_{\alpha_{2}}(\rho)\cdots{p}^{(L)}_{\alpha_{L}}(\rho),
\end{equation}
where $\IC_k\subset[N_1]\times[N_2]\times\cdots\times[N_L]$ is a subset of $k$ distinct string of indices $(\alpha_{1},...,\alpha_{L})$ (here $[N_j]$ is the set of natural numbers ranging from $1$ to $N_j$).

Since the above quantities $\Omega_k$ can in general be difficult to calculate explicitly, we have found tight upper bounds 
$\widetilde\Omega_k$ that do not involve an optimization over all states $\rho$. Our upper bound 
$\Omega_k\leq\widetilde\Omega_k$ is given by
\begin{equation}\label{eqn16}
\widetilde\Omega_k:=\max_{\substack{\SC_1,\ldots,\SC_L  \\ \sum_{\ell=1}^{L}|\SC_\ell|=L+k-1 }}
\left\Vert \frac{1}{L} \sum_{\ell=1}^L \left( \sum_{\alpha_\ell\in\SC_\ell}\Pi^{(\ell)}_{\alpha_j} \right) \right\Vert_{\infty}^L \leqslant 1,
\end{equation} 
where $\SC_\ell$ denotes a subset of distinct indices from $[N_\ell]$, $|\SC_\ell|$ denotes the size (number of elements) of $\SC_\ell$. Note that by definition $\Omega_N=1$.
We define the vector $\vect{\widetilde\omega}$ as in~\eqref{eqn14} by replacing $\Omega_k$ with $\widetilde\Omega_k$,
and then show in the Appendix that the UUR 
in Eq.~\eqref{eqn13} holds with $\vect{\omega}$ replaced by $\vect{\widetilde\omega}$.

Note that an $L>2$ measurement uncertainty relation can be trivially generated by a summing pairwise two-measurement uncertainty relations, one for each pair of observables. Our UUR \eqref{eqn13} is more powerful and is not of this form. This fact can be seen most clearly in a set of measurement operators in which any two observables share a common eigenvector. In this case, a two-measurement uncertainty relation will provide a trivial lower bound of zero, hence the pairwise sum must also be zero. However, the vector $\vect{\omega}$ in \eqref{eqn14} is in general different from $(1,0,.\ldots,0)$, (see Example~1 in Sec.~B of the Supplementary Material), thus providing a non trivial bound for the UUR \eqref{eqn13} (or the induced family obtained by applying  various uncertainty measures $\Phi$ on it). Finally, the UUR \eqref{eqn13} is not restricted to MUBs or particular values of $L$, but is valid for any number of arbitrary bases.

To summarize, we derived two \textcolor{black}{explicit closed form} uncertainty relations, \eqref{eqn4} -- valid for measurements in two orthonormal bases, and  \eqref{eqn13} -- the generalization of the former to the most general setting of $L$ POVMs. Our relations are ``fine-grained"; they do not depend on a single number (such as the maximum overlap between bases elements), but on \emph{all} components of the vector $\vect{\widetilde\omega}$, which we compute explicitly, via a majorization relation.
Our uncertainty relations are \emph{universal} and  capture the essence of uncertainty in quantum mechanics, as they are not quantified by particular measures of uncertainty such as Shannon or Renyi entropies. 

We did not explore here which bases provide the most uncertain measurement results for the UURs. One may conjecture that MUBs are the suitable candidates. Indeed, this seems to be the case, and we conjecture that $\Omega_k$ in \eqref{eqn7} is given by $\Omega_k=\frac{1}{4}(1+\sqrt{k/d})^2$, which can then be used to construct (see \eqref{eqn8}) the vector $\vect{\omega}^{\mathrm{MUB}}$ for the UUR \eqref{eqn4}. The conjecture is strongly supported by numerical simulations. Moreover, we observed that for bases that are not MUBs, the best $\vect{\omega}$ we were able to find (numerically) always majorizes $\vect{\omega}^{\mathrm{MUB}}$, i.e. $\vect{\omega}^{\mathrm{MUB}}\prec\vect{\omega}$. This provides strong support for the initial assumption that MUBs provide the most uncertain measurement outcomes. 

Another important direction of investigation is the extension of the results presented here to uncertainty relations with quantum memory \cite{Berta:2010}. These are particularly useful in the context of quantum cryptography. However, such an extension is non-trivial and is left for future work.

The authors thank Robert Spekkens and Marco Piani for useful comments and discussions.
V. Gheorghiu and G. Gour acknowledge support from the Natural Sciences
and Engineering Research Council (NSERC) of Canada
and from the Pacific Institute for the Mathematical Sciences
(PIMS). Shmuel Friedland was partially supported by NSF grant
 DMS-1216393. 
 
\textit{Note added:} Soon after the first version of this manuscript appeared on the arXiv, Pucha\l{}a, Rudnicki and \.Zyczkowski submitted a similar result on the arXiv \cite{quantph:1304.7755} (valid for orthonormal bases), now published in \cite{ZycUUR2013}. Their proof uses different techniques and we refer the interested reader to their paper for more details.


\begin{titlepage}
\center{\large\textbf{Supplementary Information\\Universal Uncertainty Relations}\\}
\center{~{ }\\}

\end{titlepage}

\onecolumngrid

\appendix

Here we derive and elaborate on the main results discussed in the paper. 
The Appendix consists of three parts. In Sec.~\ref{Asct1} we 
prove the results for two orthonormal bases. In Sec.~\ref{Asct2} we prove the results for $L>2$ POVMs. In Sec.~\ref{Asct3} we discuss a possible generalization of our results to the case where each measurement is applied according to a pre-probability distribution; we call such relations \emph{weighted uncertainty relations}.

\section{Two orthonormal bases}\label{Asct1}

We first prove (4) (main text). By construction, the sum of the first $k$ largest components of $\vect{p}(\rho)\otimes\vect{q}(\rho)$ in~(4) (main text) can not be greater than $\Omega_k$. Observe also that $\Omega_k=\Omega_1+(\Omega_2-\Omega_1)+\ldots+(\Omega_k-\Omega_{k-1})$ is the sum of the first $k$ components of the \emph{un-ordered} vector $(\Omega_1,\Omega_2-\Omega_1,\ldots,\Omega_d-\Omega_{d-1},0,\ldots,0)$, and this sum cannot be greater than the sum of the first $k$ components of the \emph{ordered} vector $(\Omega_1,\Omega_2-\Omega_1,\ldots,\Omega_d-\Omega_{d-1},0,\ldots,0)^\downarrow$, since the components of the latter are ordered in decreasing order; hence the UUR~(4) (main text) is proven.

The fact that $\Omega_k=1$, for all $d\leqslant k\leqslant d^2$, follows from the observation that, for orthonormal bases, one can always find a state $\rho$ for which $\vect{p}=(1,0,\ldots,0)$ (choose the state to be the projector onto one of the basis elements, e.g. $\rho=\dya{a_1}$). In this case, the expression in (7) of the main text for $\Omega_k$ achieves its maximum possible value of 1, for all $d\leqslant k\leqslant d^2$.

We next prove the main Theorem, which we then use to compute the first two components $\Omega_1$ and $\Omega_2$ in (9) of the main text.

\begin{theorem}\label{Athm1}
Let $A$ and $B$ be two orthogonal projectors, $A=A^2=A^\dagger$, $B=B^2=B^\dagger$, acting on $\mathbb{C}^d$. Then, 
\begin{equation}\label{Aeqn1}
\max_\rho\Tr(\rho A)\Tr(\rho B)=\frac{1}{4}\Vert A+B\Vert^2_{\infty}\leqslant 1,
\end{equation}
where the maximum is taken over all density matrices (i.e. positive-semidefinite matrices with trace 1) acting on $\mathbb{C}^d$,
and $\Vert\cdot\Vert_\infty$ denotes the operator norm (the maximum singular value of its argument).
\end{theorem}
In order to prove Theorem~\ref{Athm1} we first prove a weaker version of it in the following lemma.

\begin{lemma}\label{Alma1}
Let $\dya{a}$ and $\dya{b}$ be two rank-one projectors acting on $\mathbb{C}^d$. Then,
\begin{equation}\label{Aeqn2}
\max_\rho \matl{a}{\rho}{a}\matl{b}{\rho}{b}=\frac{1}{4}(1+\vert \ip{a}{b}\vert)^2=\frac{1}{4}\left\Vert \dya{a}+\dya{b}\right\Vert_\infty^2,
\end{equation}
where the maximum is taken over all $d\times d$ density matrices $\rho$.  
\end{lemma}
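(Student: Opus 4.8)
The plan is to establish the two equalities separately, starting with the right-hand (operator-norm) equality, since it fixes the target value that both the upper bound and the achievability argument must meet. Because $\dya{b}$ and $|\ip{a}{b}|$ are unchanged when $\ket{b}$ is multiplied by a phase, I would first fix the convention $\ip{a}{b}=c:=|\ip{a}{b}|\geqslant 0$, which streamlines every subsequent computation.

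For the operator-norm equality, set $M:=\dya{a}+\dya{b}$. This operator is positive semidefinite and annihilates the orthogonal complement of $V:=\mathrm{span}\{\ket{a},\ket{b}\}$, so its spectrum is carried by the (at most) two-dimensional space $V$. A one-line check gives $M(\ket{a}\pm\ket{b})=(1\pm c)(\ket{a}\pm\ket{b})$, so the two nonzero eigenvalues are $1\pm c$, whence $\Vert M\Vert_\infty=\lambda_{\max}(M)=1+c$ and $\frac14\Vert M\Vert_\infty^2=\frac14(1+c)^2$, which is the right-hand equality of the lemma.

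For the maximization itself, the crucial remark is that, writing $p:=\matl{a}{\rho}{a}$ and $q:=\matl{b}{\rho}{b}$, the \emph{sum} is $p+q=\Tr(\rho M)$, which for any density matrix obeys $\Tr(\rho M)\leqslant\lambda_{\max}(M)=1+c$. Combining this with the arithmetic--geometric mean inequality, $pq\leqslant\left(\frac{p+q}{2}\right)^2\leqslant\frac14(1+c)^2$, yields the upper bound at once; note this half of the argument uses neither the rank of the projectors nor any reduction to $V$. For tightness I would take the pure state $\rho=\dya{\psi}$ with $\ket{\psi}$ the normalized top eigenvector $\ket{\psi}=(\ket{a}+\ket{b})/\sqrt{2(1+c)}$, so that $p+q=1+c$ is saturated; the $\ket{a}\leftrightarrow\ket{b}$ symmetry of this vector then forces $p=q=(1+c)/2$, giving $pq=\frac14(1+c)^2$ and matching the bound.

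I expect the only genuine subtlety to be this tightness step, because the upper bound is a product of two independent inequalities (AM--GM and the operator-norm bound), and saturating both at once demands a single state that simultaneously lies along the top eigenvector of $M$ and splits its weight evenly between $\ket{a}$ and $\ket{b}$. What makes these requirements compatible is the reflection symmetry interchanging $\ket{a}$ and $\ket{b}$: the top eigenvalue $1+c$ is nondegenerate for $c<1$, so its eigenvector is symmetry-invariant and hence balanced, which is precisely $p=q$. I would be careful to (i) record the phase convention so that $\ket{a}+\ket{b}$ is genuinely the maximal eigenvector, and (ii) dispose of the degenerate case $c=1$ (where $\ket{a}$ and $\ket{b}$ agree up to a phase) directly, as the claim is immediate there.
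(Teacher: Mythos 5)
Your proof is correct, and while the upper-bound half (AM--GM combined with $\Tr(\rho M)\leqslant\Vert M\Vert_\infty$) is identical to the paper's, your achievability argument takes a genuinely different and cleaner route. The paper fixes an orthonormal pair $\{\ket{\tilde a},\ket{b}\}$ via Gram--Schmidt, writes a general candidate $\ket{x}=\ee^{\ii\phi}r\ket{b}+\sqrt{1-r^2}\ee^{\ii\theta}\ket{\tilde a}$, reduces the product to $\frac14[\cos(2\alpha-\beta)+\cos(\beta)]^2$ with $r=\cos\alpha$, $c=\cos\beta$, and optimizes over $r$; it also leaves the identity $\Vert\dya{a}+\dya{b}\Vert_\infty=1+c$ as a ``direct calculation.'' You instead diagonalize $M=\dya{a}+\dya{b}$ explicitly (eigenvalues $1\pm c$ on $\mathrm{span}\{\ket{a},\ket{b}\}$, after the harmless phase convention $\ip{a}{b}=c\geqslant0$), which both supplies the missing norm computation and hands you the maximizer for free: the normalized top eigenvector $(\ket{a}+\ket{b})/\sqrt{2(1+c)}$ saturates the norm bound, and the $\ket{a}\leftrightarrow\ket{b}$ symmetry (or a one-line computation giving $p=q=(1+c)/2$) saturates AM--GM simultaneously. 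One can check your state coincides with the paper's optimizer at $\alpha=\beta/2$, so the two arguments find the same $\rho$; yours explains \emph{why} it is optimal (it is the top eigenvector of $M$), and in fact this is exactly the structural idea the paper later exploits in the proof of its Theorem~1 for general projectors. Your handling of the edge case $c=1$ and the nondegeneracy remark for $c<1$ are appropriate; no gaps.
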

\begin{proof}
First note that
\begin{equation}\label{Aeqn3}
\matl{a}{\rho}{a}\matl{b}{\rho}{b}=\left(\sqrt{\matl{a}{\rho}{a}\matl{b}{\rho}{b}}\right)^2\leqslant\frac{1}{4} \left(\Tr\left[\rho (\dya{a}+\dya{b}) \right]\right)^2\leqslant\frac{1}{4}\left\Vert\dya{a}+\dya{b}\right\Vert_{\infty}^{2},
\end{equation}
where the first inequality follows from the geometric-arithmetic mean inequality and the second inequality from the standard property of the operator norm.

We next construct a state $\rho$ for which the inequality above is saturated. 
We show that the maximum can be achieved with a rank one matrix, $\rho=|x\rangle\langle x|$. 
Let $\ket{\tilde a}$ be a normalized vector such that the set $\{\ket{\tilde a},\ket{b}\}$ is orthonormal and the vector $\ket{a}$ is in the span of $\{\ket{\tilde a},\ket{b}\}$ (this can be done using the Gram-Schmidt process). We can therefore assume that also the optimal $\ket{x}$ is in the span of  $\{\ket{\tilde a},\ket{b}\}$ and expand it as
\begin{equation}\label{Aeqn4}
\ket{x}=\ee^{\ii\phi}r\ket{b}+\sqrt{1-r^2}\ee^{\ii\theta}\ket{\tilde a},
\end{equation}
with $0\leqslant r\leqslant 1$ and $\theta,\phi$ real numbers. 
We choose $\phi$ such that $\ee^{\ii\phi}\ip{a}{b}=\vert\ip{a}{b}\vert\equiv c$ and take $\theta$ so that $\ee^{\ii\theta}\ip{a}{\tilde a}=\vert \ip{a}{\tilde a}\vert=\sqrt{1-c^2}$.
We then obtain
\begin{equation}\label{Aeqn5}
\matl{a}{\rho}{a}\matl{b}{\rho}{b}=\vert\ip{a}{x}\vert^2\vert\ip{b}{x}\vert^2=r^2(rc+\sqrt{1-r^2}\sqrt{1-c^2})^2.
\end{equation}
To optimize the function above over the variable $r$ we denote $\cos(\alpha)\equiv r$ and $\cos(\beta)\equiv c$. Note that
without loss of generality we can take $0\leq\alpha,\beta\leq\pi/2$. Therefore,
\begin{equation}\label{Aeqn6}
\matl{a}{\rho}{a}\matl{b}{\rho}{b}=\cos^2(\alpha)\left[\cos(\alpha)\cos(\beta)+\sin(\alpha)\sin(\beta)\right]^2
=\cos^2(\alpha)\cos^2(\alpha-\beta)=\frac{1}{4}\left[\cos(2\alpha-\beta)+\cos(\beta)\right]^2.
\end{equation}
Since $\alpha$ is determined by $r$, we can choose $r$ such that $\alpha=\beta/2$. For this choice of $r$ we get
\begin{equation}\label{Aeqn7}
\matl{a}{\rho}{a}\matl{b}{\rho}{b}=\frac{1}{4}(1+c)^2,
\end{equation}
which proves the first equality in \eqref{Aeqn2}. The 
second equality in \eqref{Aeqn2} follows from direct calculation.
\end{proof}

\textbf{Proof of Theorem~\ref{Athm1}}:
Note first that 
\begin{equation}\label{Aeqn8}
\max_\rho\Tr(\rho A)\Tr(\rho B)\leqslant \max_\rho\frac{1}{4}\left(\Tr\left[\rho(A+B)\right]\right)^2\leqslant\frac{1}{4} \Vert A+B \Vert_\infty^2\leqslant 1,
\end{equation}
where the first inequality follows from the geometric-arithmetic mean inequality and the second one from $\tr [\rho(A+B)]\leqslant
 \sum_{i=1}^d \lambda_i(\rho)\lambda_i(A+B)\leqslant \|A+B\|_{\infty}
 \tr(\rho)=\|A+B\|_{\infty}$, where $\lambda_i(\bullet)$ denotes the $i$-th eigenvalue of its argument. The last inequality in \eqref{Aeqn8} is trivial and follows at once from the triangle inequality for the norm, $\Vert A+B\Vert_\infty\leqslant \Vert A \Vert_\infty + \Vert B \Vert_\infty=2$ (since $A$ and $B$ are projectors). 

It is therefore left to show that the maximum in~\eqref{Aeqn1} can be achieved. In fact, we will show now that it can be achieved 
with a pure state $\rho=|x\rangle\langle x|$; that is,
\begin{equation}\label{Aeqn9}
\max_{\ket{x}}\matl{x}{A}{x}\matl{x}{B}{x}=\frac{1}{4}\Vert A+B \Vert^2_\infty\leqslant 1.
\end{equation}

To see it, let $A\HC$ and $B\HC$ be the subspaces onto which $A$ and $B$ project, respectively. 
Now, if $A\HC\bigcap B\HC\neq\{\vect{0}\}$ (i.e. the subspaces have a non-trivial intersection) then there exist some $\ket{x_0}$ that belongs to both $A\HC$ and $B\HC$, and therefore $A\ket{x_0}=\ket{x_0}$ and $B\ket{x_0}=\ket{x_0}$. Hence, in this case,
$\Vert{A+B}\Vert_{\infty}=2$ and the maximum in~\eqref{Aeqn9} is achieved for $\ket{x}=\ket{x_0}$. 
We therefore assume in the rest of the proof that $A\HC\bigcap B\HC=\{\vect{0}\}$.

Since $A$ and $B$ are projectors, 
$\Vert A+B\Vert_{\infty}\geqslant 1$.
We first prove the theorem for the simpler case in which $\Vert A+B\Vert_{\infty}= 1$.
In this case $AB=BA=0$. To see it note that $\forall \ket{y}\in A\HC$, $\matl{y}{A+B}{y}=1+\matl{y}{B}{y}=1$ (since the norm is 1) and therefore $\matl{y}{B}{y}=0$ $\forall \ket{y}\in A\HC$. This implies that $BA\ket{v}=0$ $\forall\ket{v}\in\HC$ and therefore $BA=0$. 
Similarly, $AB=0$. 
We need to show now that whenever $\Vert A+B\Vert_{\infty}= 1$ the maximum value of the left hand side of \eqref{Aeqn9} is 1/4. One can check this by writing $\ket{x}=\alpha\ket{x_A}+\beta\ket{x_B}$  where $\ket{x_A}\in A \HC$ and  $\ket{x_B}\in B \HC$, with $\ip{x_A}{x_B}=0$. The product on the left hand side of \eqref{Aeqn9} becomes $|\alpha|^2|\beta|^2$, with maximum value 1/4. The right hand side is trivially equal to $1/4$, and the Theorem is therefore proved for this case.

The last case that is left to prove is when $A\HC\bigcap B\HC=\{\vect{0}\}$ and $\Vert A+B\Vert_{\infty}>1$. We define $\ket{y}$ to be the vector for which 
\begin{equation}\label{Aeqn10}
(A+B)\ket{y}=(\Vert A+B \Vert_{\infty})\ket{y}, 
\end{equation}
and let 
\begin{equation}\label{Aeqn11}
\ket{a}=\frac{A\ket{y}}{\Vert A\ket{y}\Vert}, \quad \ket{b}=\frac{B\ket{y}}{\Vert B\ket{y} \Vert}.
\end{equation}
Note that both $\Vert A\ket{y} \Vert$ and $\Vert B\ket{y} \Vert$ are non-zero since $\Vert A+B\Vert_\infty>1$, so $A\ket{y}\neq0$ and similarly $B\ket{y}\neq0$.
We next prove that
\begin{equation}\label{Aeqn12}
\Vert A+B \Vert_\infty = \left\Vert \dya{a}+\dya{b} \right\Vert_\infty.
\end{equation}
Indeed, since $\ket{a}\ip{a}{y}=A\ket{y}$ and
$\ket{b}\ip{b}{y}=B\ket{y}$ we get 
\begin{equation}\label{Aeqn13}
(\dya{a}+\dya{b})\ket{y}=(A+B)\ket{y}=(\Vert A+B \Vert_\infty) \ket{y},
\end{equation}
which implies 
\begin{equation}\label{Aeqn14}
\Vert \dya{a}+\dya{b} \Vert_{\infty} \geqslant \Vert A+B \Vert_\infty.
\end{equation}
On the other hand,
\begin{equation}\label{Aeqn15}
\Vert \dya{a}+\dya{b} \Vert_{\infty}\leqslant \Vert A+B \Vert_\infty,
\end{equation}
since $A\geqslant\dya{a}$ and $B\geqslant\dya{b}$ (see the definition of $\ket{a}$ and $\ket{b}$ in \eqref{Aeqn11}).
Combining \eqref{Aeqn14} and \eqref{Aeqn15} yields the equality in~\eqref{Aeqn12}.

Since $A\geqslant \dya{a}$ and $B\geqslant \dya{b}$, we also have $\Tr(\rho A)\Tr(\rho B)\geqslant \matl{a}{\rho}{a}\matl{b}{\rho}{b}$, and taking the maximum over all $\rho$ yields
\begin{equation}\label{Aeqn16}
\max_\rho\Tr(\rho A)\Tr(\rho B)\geqslant \max_\rho\matl{a}{\rho}{a}\matl{b}{\rho}{b}=\frac{1}{4}\Vert \dya{a}+\dya{b} \Vert_{\infty}^2 = \frac{1}{4}\Vert A+B \Vert_\infty^2,
\end{equation}
where the first equality follows from Lemma~\ref{Alma1} and the second equality from \eqref{Aeqn12}.
Combining \eqref{Aeqn8} and \eqref{Aeqn16} completes the proof of Theorem~\ref{Athm1}. $\Box$

\subsection{Exact expressions for $\Omega_1$ and $\Omega_2$}

We now have all ingredients for computing $\Omega_1$ and $\Omega_2$. Note that
\begin{align}\label{Aeqn17}
\Omega_1&=\max_{m,n}\max_\rho p_m(\rho)q_n(\rho)=\max_{m,n}\max_{\rho}\Tr(\rho\dya{a_m})\Tr(\rho\dya{b_n})\notag\\
&=\max_{m,n}\frac{1}{4}\left\Vert\dya{a_m}+\dya{b_n} \right\Vert^2_{\infty}=\max_{m,n}\frac{1}{4}\left[1+\vert \ip{a_m}{b_n}\vert\right]^2
=\frac{1}{4}[1+c]^2,
\end{align}
where the first equality on the second line follows from Theorem~\ref{Athm1} and the second equality
 on the second line follows from the arguments in the last part of
 the proof of Theorem~\ref{Athm1}.

To compute $\Omega_2$, first observe that, for fixed $m$ and $k$,
\begin{align}\label{Aeqn18}
\max_\rho&\left[ p_m(\rho)\sum_{n=1}^kq_n(\rho)\right]=\max_\rho\left[ \Tr\left(\rho\dya{a_m}\right)\Tr\left(\rho\sum_{n=1}^k\dya{b_n}\right) \right]\notag\\
&=\frac{1}{4}\left\Vert \dya{a_m}+\sum_{n=1}^k\dya{b_n}\right\Vert^2_{\infty}=\frac{1}{4}\left( 1+\sqrt{\sum_{n=1}^k\vert \ip{a_m}{b_n} \vert^2} \right)^2,
\end{align}
where the third equality follows from Theorem~\ref{Athm1} and the last one from the arguments in the last part of the proof of Theorem~\ref{Athm1}. By symmetry, the above equations hold under interchanging  $p$ and $q$.
Next note that
\begin{equation}\label{Aeqn19}
p_mq_n+p_{m'}q_{n'}\leqslant(q_n+q_{n'})\max\{p_m,p_{m'}\}
\end{equation}
and also
\begin{equation}\label{Aeqn20}
p_mq_n+p_{m'}q_{n'}\leqslant(p_m+p_{m'})\max\{q_n,q_{n'}\},
\end{equation}
where, for simplicity of notation, we dropped the $\rho$ dependence. Combining \eqref{Aeqn18}, \eqref{Aeqn19} and \eqref{Aeqn20} yields the expression (9) (main text) for $\Omega_2$.

\subsection{The Upper Bounds $\widetilde\Omega_k$}

Theorem~\ref{Athm1} can also be used to prove the exact expressions introduced in (11)
for the upper bounds defined in (10) (main text). In particular, (11) follows from (10)
by substituting 
$$
A=\sum_{m\in\RC}\dya{a_m} \;\;\;\text{and}\;\;\;B=\sum_{n\in\SC}\dya{b_n}
$$
in Theorem~\ref{Athm1}.

The upper bounds $\widetilde\Omega_k$ in (10) can be easily visualized using a Young-like diagram. 
\begin{figure}
\includegraphics[scale=0.6]{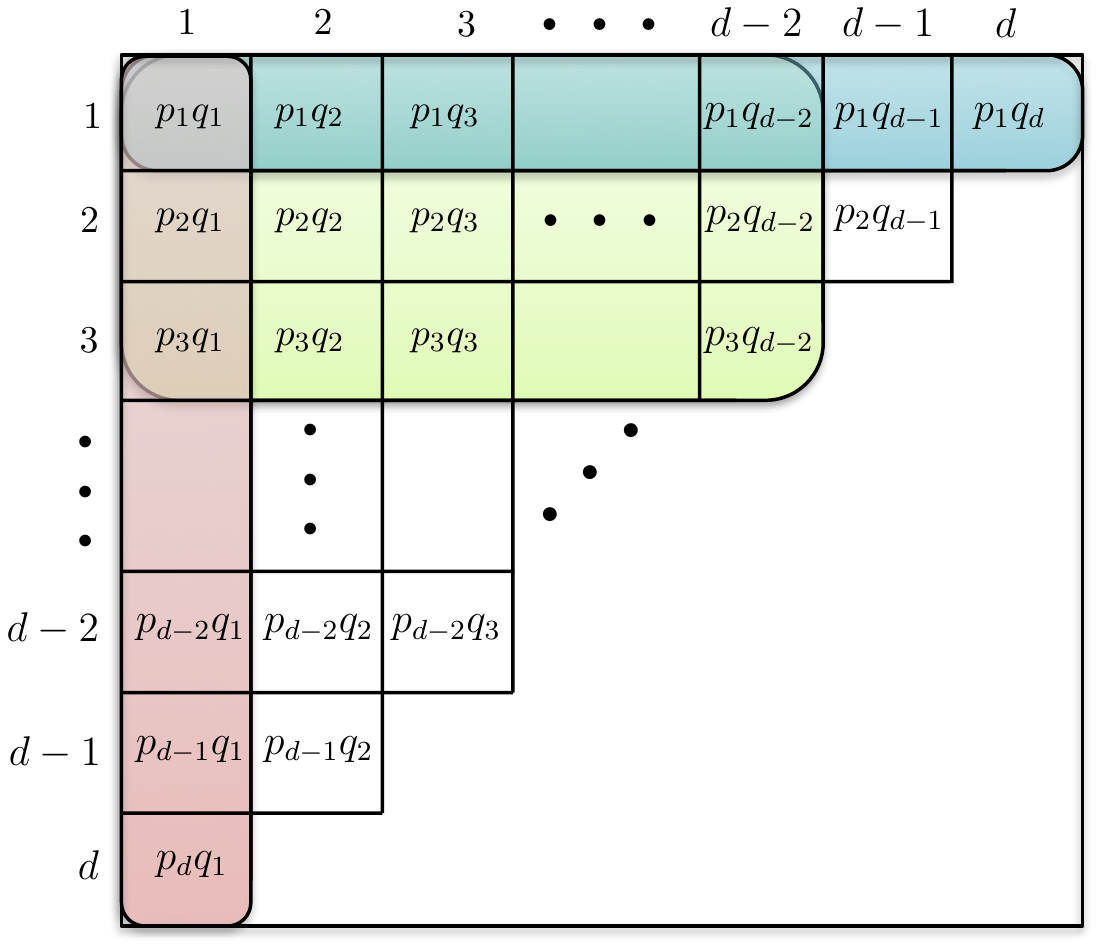}
\caption{Upper bound $\widetilde\Omega_k$ for $\Omega_k$ from a Young-like diagram. Color online.
}
\label{Afgr1}
\end{figure}
Each box in the figure represents a component $p_mq_n$ of the probability vector $\vect{p}(\rho)\otimes\vect{q}(\rho)$, arranged in lexicographical order (we assume for simplicity that $p_1\geqslant p_2\geqslant\cdots\geqslant p_D$ and  $q_1\geqslant q_2\geqslant\cdots\geqslant q_D$). For a fixed $\rho$, the upper bound $\widetilde\Omega_k$ in (10) (main text) corresponds to taking the maximum among the sum of the components of each $1\times D$ , $2\times (D-1)$ , $\ldots$ , $D\times 1$ rectangles in the figure, for a total of $D$ such rectangles. We highlight 3 possible rectangles using different colours : $1\times D$ with light blue, $3\times (D-2)$ with light green and $D\times 1$ with light red. It is easy to see that the vector $\Omega_k$ in (7) (main text) is upper bounded by $\widetilde\Omega_k$, as the former is always constructed as a partial sum over some rectangle in the figure. 

To prove the UUR~(12) (main text), we use essentially the same argument as in proving~(4) (main text), but we repeat it for the sake of completeness. By construction, $\Omega_k\leqslant\widetilde\Omega_k$, and the sum of the first $k$ largest components of $\vect{p}(\rho)\otimes\vect{q}(\rho)$ in~(4) (main text) can not be greater than $\Omega_k$. Now observe that $\widetilde\Omega_k=\widetilde\Omega_1+(\widetilde\Omega_2-\widetilde\Omega_1)+\ldots+(\widetilde\Omega_k-\widetilde\Omega_{k-1})$ is the sum of the first $k$ components of the \emph{un-ordered} vector $(\widetilde\Omega_1,\widetilde\Omega_2-\widetilde\Omega_1,\ldots,\widetilde\Omega_d-\widetilde\Omega_{d-1},0,\ldots,0)$, and this sum cannot be greater than the sum of the first $k$ components of the \emph{ordered} vector $(\widetilde\Omega_1,\widetilde\Omega_2-\widetilde\Omega_1,\ldots,\widetilde\Omega_d-\widetilde\Omega_{d-1},0,\ldots,0)^\downarrow$, since the components of the latter are ordered in decreasing order; hence the UUR~(12) is proven.

\section{$L\geqslant 2$ POVMs}\label{Asct2}
In the last part of the paper we extended the results valid for two orthonormal bases to the most general case of $L\geqslant 2$ POVMs, and derived the uncertainty relation (13) (main text). As was the case before for two orthonormal bases, we do not have an explicit expression for $\vect{\omega}$, however we use the upper bound (16) for $\Omega_k$ to construct another vector $\vect{\widetilde\omega}$ that also satisfies the uncertainty relation (13) (main text). We now prove that indeed (16)  is an upper bound on $\Omega_k$. Note that
\begin{align}\label{Beqn1}
&\max_{\mathcal{I}_k}\max_{\rho}\sum_{(\alpha_{1},...,\alpha_{L}) \in\mathcal{I}_k}{p}^{(1)}_{\alpha_{1}}(\rho){p}^{(2)}_{\alpha_{2}}(\rho)\cdots{p}^{(L)}_{\alpha_{L}}(\rho)=\max_{\mathcal{I}_k}\max_{\rho}\sum_{(\alpha_{1},...,\alpha_{L}) \in\mathcal{I}_k}\prod_{\ell=1}^L \Tr\left[\rho\Pi_{\alpha_{\ell}}^{(\ell)}\right]\notag\\
&\leq\max_{\substack{\SC_1,\ldots,\SC_L  \\ \sum_{\ell=1}^{L}|\SC_\ell|=L+k-1}}\max_{\rho}
\prod_{\ell=1}^L \Tr\left[\rho\sum_{\alpha_\ell\in\SC_\ell}\Pi_{\alpha_{\ell}}^{(\ell)}\right]
\leqslant\max_{\substack{\SC_1,\ldots,\SC_L  \\ \sum_{\ell=1}^{L}|\SC_\ell|=L+k-1 }}\max_{\rho}\left(\Tr\left[ \rho \left(\frac{1}{L}\sum_{\ell=1}^L \sum_{\alpha_\ell\in\SC_\ell} \Pi_{\alpha_\ell}^{(\ell)} \right)\right]\right)^L\notag\\
&=\max_{\substack{\SC_1,\ldots,\SC_L  \\ \sum_{\ell=1}^{L}|\SC_\ell|=L+k-1 }}\left\Vert \frac{1}{L} \sum_{\ell=1}^L \left( \sum_{\alpha_\ell\in\SC_\ell}\Pi^{(\ell)}_{\alpha_j} \right) \right\Vert_{\infty}^L \leqslant 1.
\end{align}
The first inequality is a simple consequence of the fact that $\mathcal{I}_{k}\subset \SC_1\times\cdots\times\SC_L$.
The second inequality follows from the geometric-arithmetic mean inequality, the third inequality follows from the properties of the infinity norm, and the last inequality follows from the fact that the operator inside the norm is smaller or equal to the identity. Therefore $\Omega_k\leqslant\widetilde\Omega_k$. Unlike the case of two orthonormal bases, where the first two upper bounds satisfy $\Omega_1=\widetilde\Omega_1$ and $\Omega_2=\widetilde\Omega_2$, for the case of POVMs this is no longer true, and in general the upper bounds $\widetilde\Omega_k$ are strict for \emph{all} $k$. The reason behind this fact is that Theorem~\ref{Athm1} does not admit a generalization to more than two measurements.

In the following example we show that our $L>2$ POVM uncertainty relation (13) is much stronger than the one obtained  by  summing pairwise two-measurements uncertainty relations, one for each pair of observables. 
\begin{example}\label{Bexmp1}
Consider the following three orthonormal bases in a four dimensional Hilbert space
\begin{align}\label{Beqn2}
\BC_1&=\left\{\boxed{\ket{0}}, \boxed{\ket{1}}, \ket{2}, \ket{3}\right\}\notag\\
\BC_2&=\left\{\boxed{\ket{0}}, \boxed{\frac{\ket{2}+\ket{3}}{\sqrt{2}}}, \frac{\ket{1}+\ket{2}-\ket{3}}{\sqrt{3}}, \frac{2\ket{1}-\ket{2}+\ket{3}}{\sqrt{6}}\right\}\notag\\
\BC_3&=\left\{\boxed{\frac{\ket{2}+\ket{3}}{\sqrt{2}}}, \boxed{\ket{1}}, \frac{\ket{0}+\ket{2}-\ket{3}}{\sqrt{3}}, \frac{2\ket{0}-\ket{2}+\ket{3}}{\sqrt{6}}\right\}.
\end{align}
Note that each pair of bases share a common eigenvector (boxed above), and the three common eigenvectors are all different: $\ket{0}, \frac{\ket{2}+\ket{3}}{\sqrt{2}}$ and $\ket{1}$. For this case, the upper bounds in (16) are $\widetilde\Omega_1 \cong 0.78$ and $\widetilde\Omega_k=1$, for $k>1$. Therefore $\vect{\widetilde\omega}=(0.78,0.22,0,\ldots,0)$, and our uncertainty relation (13) is non-trivial, whereas a sum of pairwise two-measurements uncertainty relation provides a trivial lower bound of zero (as each pair of basis share a common eigenvector).
\end{example}

\section{Weighted uncertainty relations}\label{Asct3}

Entropic uncertainty relations of the form
\begin{equation}\label{Ceqn1}
\sum_{\ell=1}^{L}t_\ell H(\vect{p}^{(\ell)}(\rho))\geq C.
\end{equation}
were also considered in the literature~(see \cite{1367-2630-12-2-025009} and the references within). Here $H$ is the Shannon or Renyi entropy,
$\sum_{\ell=1}^{L}t_\ell=1$, $t_\ell\geq 0$ is a probability associated with the measurement $\ell$, and $C$ is some positive constant. This is a generalization of the original entropic uncertainty relation~(3) (main text) to the case in which there are $L$ measurements not all with the same weight. In this section we incorporate this unevenness in the weights of the distinct measurements into our UURs formalism, and show that the UUR given in (13) of the main text also produces
entropic uncertainty relations of the form~\eqref{Ceqn1}.

Indeed, consider $L$ integers $\{w_\ell\}_{\ell=1,...,L}$ and set  $W\equiv\sum_{\ell=1}^Lw_\ell$.
Consider also $W$ measurements (not all distinct) such that for each $\ell$, $w_\ell$ of them are identical.
Thus, in this case of such $W$ measurements, (13) (main text) takes the form:
\begin{equation}\label{Ceqn2}
\bigotimes_{\ell=1}^L\left[\vect{p}^{(\ell)}(\rho)\right]^{\otimes w_\ell}\prec\vect{\omega},\quad\forall \rho.
\end{equation}
Now, by 
applying an additive measure of uncertainty $\Phi$ on \eqref{Ceqn2} we get
\begin{equation}\label{Ceqn3}
\sum_{\ell=1}^L w_l\Phi\left[\vect{p}^{(\ell)}(\rho)\right]\geqslant \Phi(\vect{\omega}),
\end{equation}
or, equivalently,
\begin{equation}\label{Ceqn4}
\sum_{\ell=1}^L t_\ell\Phi\left[\vect{p}^{(\ell)}(\rho)\right]\geqslant \frac{1}{W}\Phi(\vect{\omega})\;.
\end{equation} 
The quantities $t_\ell\equiv w_\ell/W$ form a probability distribution and the uncertainty relation \eqref{Ceqn4} can thus be seen as a ``weighted" one. Thus, the vector uncertainty relation~\eqref{Ceqn2} is a universal version of the entropic uncertainty relation~\eqref{Ceqn1}. Note, however, that the UUR given in~\eqref{Ceqn2} is not a new one but simply follows from our main result in (13) (main text) when applied to the special case where not all of the measurements are distinct.

\end{document}